\newtheorem{theorem}{Theorem}[section]
\newtheorem{claim}[theorem]{Claim}
\newtheorem{cor}[theorem]{Corollary}
\newtheorem{prop}[theorem]{Proposition}
\newtheorem{lem}[theorem]{Lemma}
\newtheorem{conj}{Conjecture}
\newtheorem*{conj*}{Conjecture}
\theoremstyle{definition}
\newtheorem{defn}[theorem]{Definition}
\newtheorem{exmp}[theorem]{Example}
\theoremstyle{remark}
\newtheorem{rem}[theorem]{Remark}
\let\c@equation\c@theorem
\numberwithin{equation}{section}
\begin{document}

	\title{On Sensitivity of $k$-uniform Hypergraph Properties}

	\author{Stella Biderman, Kevin Cuddy, Ang Li, Min Jae Song}
	\date{September 25th, 2017}
	\begin{abstract}
		 In this paper we present a graph property with sensitivity $\Theta(\sqrt{n})$, where $n={v\choose2}$ is the number of variables, and generalize it to a $k$-uniform hypergraph property with sensitivity $\Theta(\sqrt{n})$, where $n={v\choose k}$ is again the number of variables. This yields the smallest sensitivity yet achieved for a $k$-uniform hypergraph property. We then show that, for even $k$, there is a $k$-uniform hypergraph property that demonstrates a quadratic gap between sensitivity and block sensitivity. This matches the largest known gap found by Ambainis and Sun (2011) for Boolean functions in general, and is the first known example of such a gap for a graph or hypergraph property.
	\end{abstract}
	\maketitle
	
	\tableofcontents
	
	\section{Introduction}
The sensitivity of a Boolean function $f: \{0,1\}^{n} \rightarrow \{0,1\}$ is the maximum over all inputs $x$ of the number of $i$ such that flipping the $i$-th bit of $x$ flips the value of $f(x)$. Block sensitivity, introduced by Nisan \cite{N}, is the maximum over all inputs $x$ of the maximum number of disjoint blocks of variables such that flipping all the bits in any single block flips the value of $f(x)$. We present these definitions with precise mathematical notation in section 2.  It follows from definition that for all $f$, $s(f) \leq bs(f)$, since we can just consider a partition of the input into blocks of size one.  Whether there is a polynomial bound in the other direction is a long-standing open question and is known as the Sensitivity Conjecture. In other words, do there exist $a,b\in\mathbb{R}$ such that $\forall f:\{0,1\}^n\to\{0,1\},  bs(f)<a\cdot s(f)^b$? A stronger conjecture by Nisan and Szegedy \cite{NS} is that $bs(f) \leq O(s(f)^{2})$. Currently, the best known upper bound on block sensitivity is exponential in sensitivity, as given by Ambainis et al. \cite{APV}; their upper bound is $bs(f) \leq \max(2^{s(f)-1}(s(f)-\frac{1}{3}),s(f))$. Recently, this bound has been improved further to $bs(f) \leq (\frac{8}{9}+o(1))s(f)2^{s(f)-1}$ by He, Li, and Sun \cite{HLS}.

The Sensitivity Conjecture motivates the search for Boolean functions with large gaps between sensitivity and block sensitivity. The largest known gap is quadratic and is due to Ambainis and Sun \cite{AS}. One of the earlier examples of a function demonstrating a quadratic gap was given by Rubinstein \cite{R}. Chakraborty \cite{C} explained how a minor modification to Rubinstein's function yields a cyclically invariant Boolean function also demonstrating a quadratic gap. This result raises the question: can more weakly symmetric functions, specifically graph properties and uniform hypergraph properties, demonstrate quadratic (or greater) gaps between sensitivity and block sensitivity? In this paper, we explore the sensitivity of $k$-uniform hypergraph properties and consider the following three conjectures of Laszlo Babai (all three conjectures are for every $k \in \mathbb{N}, k\geq 2$):

\begin{conj}
There exists a $k$-uniform hypergraph property $f$ such that $$s(f)=O(\sqrt{n})=O(v^{k/2}).$$
\end{conj}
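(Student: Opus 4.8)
\emph{Proof proposal.} The plan is to exhibit, for each $k$, one explicit $k$‑uniform hypergraph property and then bound its sensitivity from above by $O(v^{k/2})$ by a direct ``which bit flips can flip $f$'' analysis. The guiding example is the graph case $k=2$: the property ``$G$ has an isolated vertex'' already has sensitivity $\Theta(v)=\Theta(\sqrt n)$, because at a $1$‑input the only way to change the value is to un‑isolate the (unique) isolated vertex, which costs flipping one of its $v-1$ incident slots, while at a $0$‑input the only way is to isolate a fresh vertex, which requires deleting the unique edge at one of the at most $v$ degree‑one vertices. For \emph{even} $k$ the natural generalization is: set $f(H)=1$ iff some set $T$ of $k/2$ vertices lies in no hyperedge of $H$. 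I would first record that $f$ is invariant under vertex relabeling (so it is a hypergraph property) and that it is non‑constant (true on the empty hypergraph, false on the complete one).

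For the \textbf{$1$‑side} I would observe that deleting a hyperedge can never destroy a ``witness'' $T$ (a $(k/2)$‑set covered by no hyperedge), so a slot whose flip sends $f$ from $1$ to $0$ must be an \emph{added} hyperedge $e$, and adding $e$ kills the witness $T$ only when $T\subseteq e$; hence $e$ must contain every witness, in particular one fixed witness $T_0$, so the $1$‑sensitive slots lie among the $\binom{v-k/2}{k/2}=O(v^{k/2})$ hyperedges through $T_0$. For the \textbf{$0$‑side}, adding a hyperedge can never create a new uncovered set, so a sensitive slot must be a \emph{deleted} hyperedge $e_0$, and deleting $e_0$ creates a witness only if $e_0$ is the \emph{unique} hyperedge covering some $(k/2)$‑subset $T\subseteq e_0$; since each of the $\binom{v}{k/2}=O(v^{k/2})$ many $(k/2)$‑sets has at most one ``unique cover,'' there are at most $O(v^{k/2})$ such $e_0$. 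Together these give $s(f)=O(v^{k/2})=O(\sqrt n)$; the matching lower bound (hence the $\Theta$ of the abstract) comes from the input consisting of all hyperedges except those through a fixed $(k/2)$‑set $T_0$, where $T_0$ is the unique witness and all $\Theta(v^{k/2})$ hyperedges through it are $1$‑sensitive.

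For \textbf{odd} $k$ this exact property only yields $s(f)=\Theta(v^{(k+1)/2})$, a factor $\sqrt v$ too large: the bad instances are (near‑)Steiner systems $S((k+1)/2,k,v)$, in which \emph{every} hyperedge is the unique cover of one of its $(k+1)/2$‑subsets, so deleting any of the $\Theta(v^{(k+1)/2})$ blocks creates a witness. To recover the right exponent I would switch to a \emph{windowed} witness: a pair $(T,W)$ with $|T|=\lfloor k/2\rfloor$, $|W|=\lceil\sqrt v\rceil$, $T\subseteq W$, and no hyperedge $e$ of $H$ satisfying $T\subseteq e\subseteq W$; put $f(H)=1$ iff such a pair exists. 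The $1$‑side goes through as before — again only additions can matter, and such an addition must lie inside some window $W_0$, of which there are $\binom{\lceil\sqrt v\rceil}{k}=O(v^{k/2})$ — so the crux is the $0$‑side: one must show that for any $0$‑input $H$ the number of hyperedges $e_0$ whose deletion isolates some hyperedge between a $T$ and an $\lceil\sqrt v\rceil$‑window is only $O(v^{k/2})$. This reduces to a Tur\'an/container‑type statement about the links of the $\lfloor k/2\rfloor$‑sets: the $0$‑input condition forces every such link to have independence number below $\lceil\sqrt v\rceil$, hence to be dense, hence to contain few edges that can be ``isolated'' in a $\sqrt v$‑sized window. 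Proving this counting bound cleanly is where I expect the real difficulty to lie; a fallback, if the link analysis turns out unwieldy, is to replace the $\lceil\sqrt v\rceil$‑window by a constant‑size gadget that keeps the $0$‑side count at $O(v^{\lfloor k/2\rfloor})\le O(\sqrt n)$ at the expense of a weaker lower bound.
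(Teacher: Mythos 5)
Your construction for even $k$ is correct and is genuinely different from (and simpler than) the paper's. The paper's property is the existence of an \emph{isolated clique}: $f=1$ iff the hypergraph contains a clique on $\Theta(v^t)$ vertices such that every hyperedge not contained in it meets its vertex set in fewer than $i$ vertices; tuning $i(1-t)=k/2$ balances $s^0(f)=O(v^{i(1-t)})$ against $s^1(f)=\Theta(v^{k-i(1-t)})$. Your property (``some $(k/2)$-set lies in no hyperedge'') avoids the isolation bookkeeping entirely: the injection from sensitive deletions to $(k/2)$-sets having a unique cover, and the observation that a $1$-to-$0$ flip must be an addition through a fixed witness, are both clean and complete, and your extremal input supplies the matching $\Omega(v^{k/2})$. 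For even $k$ this is a full proof of the conjecture.

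The gap is the odd case, which you yourself flag; the conjecture is asserted for every $k\ge 2$. Your windowed-witness variant leaves the $0$-side count unproven, and the obstruction is real: the injection argument from the even case breaks because the number of pairs $(T,W)$ is far larger than $v^{k/2}$, so ``each pair has at most one unique cover'' no longer bounds the number of sensitive deletions, and the Tur\'an-type statement about links that you would need is not obviously true. The constant-size fallback is too vague to assess. The paper's way around this is worth noting: rather than shrinking the window, it \emph{grows the target subgraph}. For odd $k$ it takes the isolated clique on $v^t$ vertices with $t=\frac{1}{k+1}$ and isolation parameter $i=\frac{k+1}{2}$, so that $i(1-t)=\frac{k+1}{2}\cdot\frac{k}{k+1}=\frac{k}{2}$ exactly; the fractional exponent $k/2$ is reached because the clique size is itself a fractional power of $v$. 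Your witness structure has fixed size $\lfloor k/2\rfloor$ or $\lceil k/2\rceil$ and so can only produce integer exponents on the $0$-side, which is precisely why you are forced into the window device. To complete the proof you must either supply the missing counting lemma for the windowed property or adopt a growing witness structure as the paper does.
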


\begin{conj}
There exists a $k$-uniform hypergraph property $f$ such that $s(f)$ and $bs(f)$ display a quadratic gap. That is,  $bs(f)=\Omega(s(f)^2)$.
\end{conj}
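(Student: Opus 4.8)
The plan is to adapt Rubinstein's function \cite{R}. Recall that on $n=m^{2}$ variables arranged in an $m\times m$ grid it outputs $1$ exactly when some row contains \emph{exactly} two ones that are, moreover, \emph{cyclically consecutive}; this yields $s=\Theta(m)=\Theta(\sqrt n)$ and $bs=\Theta(m^{2})=\Theta(n)$, the quadratic gap we want. To turn this into a $k$-uniform hypergraph property one must realise the grid, the notion of ``row,'' and the notion of ``consecutive'' in an $S_{v}$-invariant way. For \emph{even} $k=2\ell$ I would index the ``rows'' by the $\binom{v}{\ell}\approx v^{\ell}=\sqrt n$ subsets $S\in\binom{[v]}{\ell}$, taking row $S$ to be the link $\operatorname{link}_{S}(G)=\{\,e\setminus S : S\subseteq e\in E(G)\,\}$, an $\ell$-uniform hypergraph on $[v]\setminus S$ with about $\sqrt n$ potential edges, while each edge of $G$ lies in exactly $\binom{2\ell}{\ell}=O(1)$ rows. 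One then fixes a carefully designed ``Rubinstein-like'' $\ell$-uniform hypergraph property $R$ and sets $f(G)=1$ iff $R(\operatorname{link}_{S}(G))=1$ for some $S$. This is automatically a $k$-uniform hypergraph property, and the even-$k$ hypothesis is precisely what makes the symmetric ``$\ell+\ell$ split'' available: for $k=2$ the row property is forced to be a symmetric Boolean function, which is either too non-local (pushing sensitivity up to $\Theta(n)$) or too degenerate (killing block sensitivity), so there $f$ must instead be taken to be something like ``$G$ has a connected component isomorphic to a fixed rigid gadget,'' while for odd $k$ I would not attempt a construction here --- this is why the statement is restricted to even $k$.

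Granting a suitable $R$, the two estimates go as follows. \textbf{Block sensitivity.} Evaluate $f$ at the empty hypergraph. A block of $O(1)$ edges that together constitute a minimal witness of $R$ inside one link flips $f$ from $0$ to $1$; choosing such blocks edge-disjointly so that they cover a positive fraction of all $n=\binom{v}{k}$ edges --- a R\"odl-nibble packing for the fixed block shape --- gives $\Omega(n)$ disjoint sensitive blocks, so $bs(f)=\Omega(n)$. \textbf{Sensitivity.} Split on the value of $f$. At a $1$-input $G$, flipping an edge $e$ changes only the links $\operatorname{link}_{S}$ with $S\subseteq e$, so a sensitive $e$ must destroy \emph{every} witness of $G$ at once; since $e$ has only $\binom{2\ell}{\ell}=O(1)$ many $\ell$-subsets, this forces all witnesses of $G$ into a single $k$-set, and then the sensitive edges are confined to the $O(\sqrt n)$ edges through one fixed $\ell$-set. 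At a $0$-input $G$, a sensitive $e$ must instead \emph{create} a witness inside some $\operatorname{link}_{S}$ with $S\subseteq e$; here the decisive feature of $R$ --- the hypergraph surrogate for ``consecutive,'' namely that from any non-witness of $R$ at most $O(1)$ single flips reach a witness --- makes each of the $\binom{v}{\ell}$ rows contribute only $O(1)$ sensitive edges, hence $O(\sqrt n)$ in total. With an $\Omega(\sqrt n)$ lower bound from a clean single-witness input, this gives $s(f)=\Theta(\sqrt n)$ and therefore $bs(f)=\Omega(n)=\Omega(s(f)^{2})$.

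The hard part is the design and verification of $R$: it must simultaneously (i) have only constant-size minimal witnesses, (ii) admit a positive-density edge-disjoint packing of its witness-forming blocks, and (iii) be ``$O(1)$-local'' at every $0$-input --- all while being a genuine $S_{v-\ell}$-invariant property. Rubinstein's row property is $O(1)$-local only because a row carries a linear/cyclic geometry, which an $\ell$-uniform link does not possess; so the core of the argument is to find --- in the spirit of Chakraborty's more symmetric variant \cite{C} --- a symmetric substitute for ``consecutive,'' for instance by constraining \emph{both} the number of edges of a witness candidate \emph{and} the size of their common pairwise intersection so that only $O(1)$ single-flip completions ever exist. Within the sensitivity bound I expect the subtlest point to be ruling out a ``star-type'' blow-up at $0$-inputs: because an edge lies in $\binom{2\ell}{\ell}$ rows rather than the single row of Rubinstein's grid, one has to check that creating a witness in one row while simultaneously perturbing the $O(1)$ other rows through that edge cannot be done in more than $O(\sqrt n)$ ways overall. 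Once $R$ is pinned down, the block-sensitivity bound is a routine packing computation, and the sensitivity bound reduces to the two case analyses above.
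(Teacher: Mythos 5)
There is a genuine gap: the entire construction rests on a ``Rubinstein-like'' $\ell$-uniform property $R$ that is never defined, and you yourself identify correctly why defining it is the whole difficulty --- Rubinstein's row gadget is $O(1)$-local only because a row carries a linear order, and an $S_{v-\ell}$-invariant property on a link has no such structure. In particular your requirement (iii), that at \emph{every} $0$-input of $R$ only $O(1)$ single flips create a witness, is very restrictive for a symmetric property with constant-size witnesses: from an input consisting of a partial witness (e.g.\ one edge of a two-edge witness), the orbit under $S_{v-\ell}$ of the completing flip typically has polynomial size, which is exactly the failure mode of your own ``two edges with prescribed intersection'' suggestion. So as written this is a program, not a proof, and the unproven step is the theorem.

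It is worth contrasting with how the paper actually closes this gap. The paper does not try to make each ``row'' $O(1)$-sensitive; it takes $f$ to be ``$G$ contains an isolated copy of $K_{k+1}^{(k)}$,'' where \emph{isolated} means every edge not inside the clique meets its vertex set in fewer than $i=k/2$ vertices. Block sensitivity $\Omega(v^k)=\Omega(n)$ comes, as in your plan, from packing edge-disjoint $(k+1)$-cliques into the empty graph. But the $s^0=O(v^{k/2})$ bound is obtained by a \emph{global} counting argument: the isolation condition forces any two sensitive $(k+1)$-tuples to share at most $i-1$ vertices, so the number of such tuples is at most $\binom{v}{i}/\binom{k+1}{i}=O(v^{i})=O(v^{k/2})$, and each contributes one sensitive edge. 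The $s^1$ bound is a direct count of edges that can destroy a single isolated clique, $\binom{k+1}{k}+O(v^{k-i})=O(v^{k/2})$. The isolation parameter $i$ plays the structural role you were hoping ``consecutive'' would play, but it controls the total number of near-witnesses in the whole graph rather than the per-row sensitivity, which is why no unconstructed gadget is needed. If you want to salvage your link-based approach, you would need to replace requirement (iii) by a similar global near-disjointness statement across rows.
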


\begin{conj}
For every $k$-uniform hypergraph property $f,$ $$s(f)=\Omega(\sqrt{n})=\Omega(v^{k/2}).$$
\end{conj}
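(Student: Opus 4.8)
The plan is to prove the bound first for \emph{monotone} properties, where a clean argument is available, and then to try to remove the monotonicity hypothesis; I expect the non-monotone case to be the real obstacle. So suppose first that $f$ is a non-constant monotone $k$-uniform hypergraph property, for which necessarily $f(\emptyset)=0$ and $f$ takes the value $1$ on the complete hypergraph. Fix any minterm $M$ (an inclusion-minimal set of hyperedges with $f(M)=1$) and any maxterm $W$ (an inclusion-maximal set of hyperedges with $f(W)=0$), and let $\overline{W}$ be the set of non-edges of $W$. The first step is to observe that $\overline{W}$ meets every isomorphic copy of $M$: if some $\pi\in S_v$ had $\pi(M)\cap\overline{W}=\emptyset$, then $\pi(M)\subseteq W$ as edge sets, and monotonicity would give $1=f(M)=f(\pi(M))\le f(W)=0$. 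The second step is a first-moment estimate: for a uniformly random $\pi\in S_v$, each hyperedge of $M$ is carried to a uniformly random $k$-subset of $[v]$, so $\mathbb{E}_\pi\bigl[\,|\pi(M)\cap\overline{W}|\,\bigr]=|M|\cdot|\overline{W}|/n$; since this random variable is always at least $1$, we get $|M|\cdot|\overline{W}|\ge n$, hence $\max(|M|,|\overline{W}|)\ge\sqrt n$. The third step turns size into sensitivity: if $|M|\ge\sqrt n$, then at the input $M$ each of its $\ge\sqrt n$ present edges is sensitive, since deleting any edge of a minterm leaves an edge set containing no minterm and so sends $f$ to $0$; dually, if $|\overline{W}|\ge\sqrt n$, then at $W$ each of its $\ge\sqrt n$ absent edges is sensitive, since adding any edge to a maxterm sends $f$ to $1$. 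In either case $s(f)\ge\sqrt n=\Omega(v^{k/2})$.

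For a general, not necessarily monotone, property the first and third steps survive but the second no longer finishes the proof. Normalize $f(\emptyset)=0$. The ``extremal input'' idea behind the third step already gives $s(f)\ge\sqrt n$ whenever a smallest $1$-input has at least $\sqrt n$ edges, or a largest $1$-input has at least $\sqrt n$ non-edges, or (applying this to $1-f$) the corresponding statements for $0$-inputs hold. The only case that remains is that $f$ has a $1$-input $M$ with fewer than $\sqrt n$ edges together with a $0$-input $W$ with fewer than $\sqrt n$ non-edges (or the mirror situation with the roles of $0$ and $1$ reversed). The first-moment bound still produces $\pi\in S_v$ with $\pi(M)\subseteq W$ as edge sets, but without monotonicity this only forces $f$ to change value somewhere along the chain of edge additions from $\pi(M)$ up to $W$: it exhibits a single sensitive pair, not the block of $\sqrt n$ simultaneously sensitive edges that we need.

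The main obstacle is therefore to rule out this last configuration --- equivalently, to show that a $k$-uniform hypergraph property of sensitivity $o(v^{k/2})$ must be constant. My proposed attack is to choose the witnesses $M$ and $W$ as structured as possible and to connect a copy of $M$ to $W$ by an explicit chain of single-edge additions built from highly symmetric steps, arranged so that at many stages of the chain there are $\Omega(v^{k/2})$ single-edge changes that are equivalent under the automorphism group of the current configuration; a low-sensitivity assumption should then propagate the value of $f$ all the way up the chain and force $f(\pi(M))=f(W)$, the desired contradiction. Making the chain symmetric enough for this propagation to go through is the crux. A more algebraic alternative would be to show that every non-constant $k$-uniform hypergraph property has real polynomial degree $\Omega(n)$, but converting that into a sensitivity bound would itself require a polynomial relation between sensitivity and degree, which is essentially the Sensitivity Conjecture again. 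Closing the combinatorial gap, or circumventing this circularity, is where I expect the real difficulty to lie.
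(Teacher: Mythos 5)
The statement you are attempting is Conjecture 3 of the paper, which the authors explicitly leave open: they establish it only for $k=2$ (by citing Tur\'an's lower bound for graph properties) and say in Section 5 that they have made no progress on it. So there is no proof in the paper to compare against, and your proposal must be judged as an attack on an open problem. Judged that way, your monotone case is correct and is a genuine partial result: the action of $S_v$ on the $\binom{v}{k}$ coordinates is transitive, so for uniformly random $\pi$ each edge of a minterm $M$ lands uniformly among the $n$ coordinates, the first-moment computation gives $|M|\cdot|\overline{W}|\ge n$, and at a minterm (resp.\ maxterm) of a monotone property every present (resp.\ absent) edge is indeed sensitive. Your reduction of the general case to the single remaining configuration --- a $1$-input with fewer than $\sqrt n$ edges coexisting with a $0$-input with fewer than $\sqrt n$ non-edges --- is also sound, provided ``smallest'' and ``largest'' are read as extremal in edge count, so that every one-bit move from such an input is forced to change the value.

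The gap is the non-monotone case itself, which is the entire content of the conjecture, and your proposed attack on it contains no working mechanism. To propagate the value of $f$ along a chain of single-edge additions from $\pi(M)$ up to $W$ under a hypothesis $s(f)=o(v^{k/2})$, you would need, at each intermediate hypergraph $G$ of the chain, more than $s(f)$ single-edge modifications that are forced to behave identically, which requires them to lie in a single orbit of the stabilizer of $G$ inside $S_v$. A generic intermediate graph on such a chain has trivial automorphism group, so no large orbits exist, and nothing in your construction lets you route the chain exclusively through highly symmetric configurations while still starting at an arbitrary sparse $1$-input $\pi(M)$ and ending at an arbitrary nearly complete $0$-input $W$. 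The algebraic alternative you mention is, as you yourself note, circular. So the proposal proves the conjecture for monotone properties and correctly locates the difficulty, but it does not prove the statement; consistent with the paper, the conjecture remains open.
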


In Section 3, we answer Conjecture 1 in the affirmative by demonstrating, for every $k \in \mathbb{N}$, a $k$-uniform hypergraph property with sensitivity $s(f)=\Theta(\sqrt{n})=\Theta(v^{k/2})$. In Section 4, we partially answer Conjecture 2 in the affirmative by demonstrating a $k$-uniform hypergraph property with a quadratic gap between senstivity and block sensitivity for every even $k$.  However, for odd $k,$ the best we have accomplished is a gap of $\frac{2k}{k+1}$ in the exponent.  That is, for all odd $k$, we give a graph property $f$ such that 
$$bs(f)=\Theta(s(f)^{\frac{2k}{k+1}}).$$

Interestingly, in the case of $k=2$, we know that the function given in Section 4 is optimal in the sense of giving the largest gap between sensitivity and block sensitivity.  Turan \cite{T} has shown that for a graph property $f$ on $v$ vertices, $s(f) = \Omega(\sqrt{n})=\Omega(v)$. This also answers Conjecture 3 in the affirmative for the case of $k=2$, but for other values of $k$, the conjecture was open. After we raised this question in our initial preprint, it was answered negatively by Li and Sun \cite{LS}.

	\section{Preliminaries}
		We first introduce some complexity measures that can be used to analyze Boolean functions.

	\subsection{Definitions}
\begin{defn}\label{sensitivity}
For a Boolean function $f: \{0,1\}^n \to \{0,1\}$ and an input string $x \in \{0,1\}^{n}$, the \textbf{sensitivity of $\boldsymbol{f}$ at $\boldsymbol{x}$}, denoted as $\boldsymbol{s(f,x)},$ is $$s(f,x) =|\{i \in [n] \: | \: f(x) \neq f(x^{i})\}|$$ where 
$\lbrack n \rbrack$ is the set $\{1,2,\ldots ,n\}$ and $x^{i}$ is the string $x$ with the $i^{th}$ bit flipped, that is, changed either from 0 to 1 or from 1 to 0.  So the sensitivity of $f$ at $x$ is the number of single-bit changes to $x$ that change $f(x)$.  
\\
The \textbf{sensitivity of $\boldsymbol{f}$} or $\boldsymbol{s(f)}$ is then $$s(f) = \max\limits_{x \in \{0,1\}^{n}} s(f,x).$$
\end{defn}

There is a similar measure called block sensitivity, introduced by Nisan \cite{N}.  
\begin{defn}\label{blocksensitivity}
A \textbf{sensitive block for $\boldsymbol{f}$ at $\boldsymbol{x}$} is a subset $B \subseteq [n]$ such that 
$f(x) \neq f(x^{B})$, where $x^{B}$ is the string $x$ with all bits whose indices are in $B$ flipped.
For example, if $n=3, \: B_{1} = \{1,2\},$ and $x=101$, then $x^{B_{1}}$ is $011$.  
The \textbf{block sensitivity of $\boldsymbol{f}$ at $\boldsymbol{x}$} is then
$$bs(f,x) = \max_{B_{1}, \ldots , B_{k}} k$$ 
where the maximum is taken over all sets of disjoint sensitive blocks for $f$ at $x$.  This is the maximum number of disjoint sensitive blocks for $f$ at $x$.
The \textbf{block sensitivity of $\boldsymbol{f}$} or $\boldsymbol{bs(f)}$ is then defined to be
$$bs(f) = \max\limits_{x \in \{0,1\}^{n}} bs(f,x).$$
\end{defn}

Clearly, for any Boolean function $f$, we have 
		\begin{equation}
			s(f) \leq bs(f). \nonumber
		\end{equation}

\begin{defn}\label{oneandzero}
It will be useful when presenting proofs to use the notation $\boldsymbol{s^{0}(f)}$ for the sensitivity when changing $f$'s value from 0 to 1 
and $\boldsymbol{s^{1}(f)}$ for the sensitivity when changing 
the value from 1 to 0.  Sensitivity is then $\max \{ s^{0}(f), s^{1}(f)\}$.  The same notation will be used for block sensitivity when required.  
\end{defn}

\begin{defn}
Two complexity measures on Boolean functions $A(f)$, $B(f)$ are called \textbf{polynomially related} if there exist $n$, $m$, $s$, $t \in \mathbb{N}$ such that 
for any Boolean function $f$, $s \cdot A(f)^{n} \leq B(f)$ and $t \cdot B(f)^{m} \leq A(f)$.  
\end{defn}

As stated before, $s(f) \leq bs(f)$, but a polynomial bound in the other direction is unknown.  

		\begin{defn}
	A Boolean function $f: \{0,1\}^{v \choose{2}} \to \{0,1\}$ is called a \textbf{graph property} if for every input $x = (x_{(1,2)},...,x_{(n-1,n)})$ and every permutation $\sigma \in S_v$, we have
			\begin{equation}
				f(x_{(1,2)},x_{(1,3)},...,x_{(n-1,n)}) = f(x_{(\sigma (1), \sigma (2))},...,x_{(\sigma (n-1), \sigma (n))}). \nonumber
			\end{equation}
		\end{defn}
Here the Boolean string $x$ is interpreted as a graph in the following manner: $x_{(1,2)}$ is 1 if there is an edge connecting vertex 1 and vertex 2, and it is 0 if there is no such edge.  
The labelling of the vertices is of course arbitrary; thus we could have two distinct Boolean strings representing the ``same'' graph (two isomorphic graphs).  The length of the input string 
must be ${v \choose{2}}$ for some $v \in \mathbb{N}$ because this is how many edge relations must be specified to define a graph on $n$ vertices.  A graph property $f$ thus satisfies $f(G) = f(H)$ whenever $G$ and $H$ are isomorphic graphs.	Similarly, we can define $k$-uniform hypergraph property.
			\begin{defn}
				A Boolean function $f: \{0,1\}^{v \choose{k}} \to \{0,1\}$ is called a \textbf{$k$-uniform hypergraph property} if for every input $x = (x_{(1,2,...,k)},...,x_{(n-k+1,...,n-1,n)})$ and every permutation $\sigma \in S_v$, we have
			\begin{equation}
				f(x_{(1,2,...,k)},...,x_{(n-k+1...,n-1,n)}) = f(x_{(\sigma (1), \sigma (2),..., \sigma(k))},...,x_{(\sigma (n-k+1),...,\sigma (n-1), \sigma (n))}). \nonumber
			\end{equation}
			\end{defn}
Again, this means that $f$ takes on the same value for isomorphic hypergraphs.

	\subsection{Previous Results}

		The largest known gap between sensitivity and block sensitivity is quadratic. A classic example of a function demonstrating quadratic separation is Rubinstein's function \cite{R}.
	\begin{exmp} ({\it Rubinstein's function})
		Let $n = k^2$ for some even $k$. We partition the $n$ variables into $k$ consecutive blocks of $k$ variables. We denote the blocks by $B_1,...,B_k$. Let $f(x) = 1$ if and only if there exists a block $B_i$ such that exactly two consecutive bits $x_{(i-1)k+j}, x_{(i-1)k+(j+1)} \in B_i$ take on the value 1 and all the other bits in $B_i$ are 0. The sensitivity of $f$ is $2k$ and the block sensitivity is $k^2/2$.
	\end{exmp}

	Chakraborty modified Rubinstein's function and constructed a cyclically invariant Boolean function with a quadratic gap between sensitivity and block sensitivity.
\begin{defn} Let $x = x_1x_2...x_n$ where each $x_i \in \{0,1\}$. Then for $0 < l < n$, the binary string $x_{l+1}x_{l+2}...x_nx_1...x_l$ is a \textbf{cyclic shift} of $x$.
\end{defn}

	\begin{exmp}
		Let $g : \{0,1\}^{k^2} \rightarrow \{0,1\}$ be Rubinstein's function. Define $f : \{0,1\}^{k^2} \rightarrow \{0,1\}$ as $f(x) = 1$ if and only if $g(x') = 1$ for some $x'$ which is a cyclic shift of $x$. $f$ has sensitivity $2k$ and block sensitivity $k^2/2$.
	\end{exmp}
	Cook et al. \cite{CDR} proved that for a CREW PRAM (Concurrent Read Exclusive Write Parallel RAM - a theoretical construct simulating a type of computer), it takes at least $s(f)$ steps to compute the value of $f$. Nisan \cite{N} later introduced the concept of block sensitivity and observed that $s(f)$ can be replaced by $bs(f)$ for the lower bound of CREW PRAM's time complexity.  Moreover, he showed that the time complexity of CREW PRAM is equal to (up to a constant factor) the logarithm of block sensitivity.

	Nisan and Szegedy \cite{NS} proved that block sensitivity is polynomially related to other complexity measures such as decision tree complexity, certificate complexity, and degree of the polynomial representation of a Boolean function.  For more information on such polynomial relations between complexity measures, we refer the reader to \cite{HKP}.  The surprising polynomial relations between almost all types of complexity measures of Boolean functions motivate the search for an answer to the Sensitivity Conjecture. 
	
	H.-U. Simon \cite{S} proved that for any Boolean function $f$, we have $s(f) \geq 1/2\log n -1/2\log{\log n} + 1/2$, where $n$ is the number of effective variables, that is, the number of variables on which the function actually depends. This lower bound is tight up to the additive  $O(\log{\log n})$. This gives us an exponential upper bound on the block sensitivity in terms of sensitivity. Kenyon and Kutin \cite{KK} introduced the notion of $\ell$-block sensitivity and gave a better upper bound, $bs(f) \leq O(e^{s(f)}\sqrt{s(f)})$, which is still exponential in sensitivity. Ambainis et al. \cite{APV} improved this bound to $bs(f) \leq 2^{s(f)-1}(s(f)-\frac{1}{3})$, and recently, it was improved further to $bs(f) \leq (\frac{8}{9}+o(1))s(f)2^{s(f)-1}$ by He, Li, and Sun \cite{HLS}.

	A function $f$ is called monotone if $f(x) \leq f(y)$ whenever $y$ can be obtained by flipping the 0's of $x$ to 1's. Nisan \cite{N} proved that for monotone Boolean functions, sensitivity and block sensitivity are asymptotically the same.  

	For graph properties, Turan \cite{T} proved that if $f$ is a graph property on $v$ vertices, then $s(f) = \Omega(v) = \Omega(\sqrt{n})$, which implies that the gap is at most quadratic for graph properties. In the same paper, he provides an example, the ``isolated vertex'' property, which shows that the lower bound is tight.

	\begin{exmp} ({\it Isolated vertex property}) 
		Let $f$ be the graph property on $v$ vertices such that $f(G) = 1$ if and only if $G$ has an isolated vertex. The sensitivity of $f$ is $v-1$. Note that this property is monotone, so $bs(f) = s(f)$.
	\end{exmp}
			
\section{$k$-Uniform  Hypergraph Property with $\Theta(v^{k/2})$ Sensitivity}
	The isolated vertex property in the previous section gives a graph property with sensitivity exactly $\Theta(v)$. In this section, we show $k$-uniform graph properties with $\Theta(v^{k/2})$ sensitivity for all $k \geq 3$.

	\begin{theorem}
		Given any $k,i \in \mathbb{N}$ with $k\geq 3$, $1 \leq i < k$ and any $t \in [0,1]$, there exists a $k$-uniform hypergraph property $f$ such that $s(f) = \max\{O(v^{i(1-t)}), \Theta(v^{k-i(1-t)})\}$.
	\end{theorem}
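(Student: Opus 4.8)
The plan is to generalize the isolated-vertex property. Fix $j=\lceil it\rceil$ and an integer threshold $D$ of order $v^{it}$ with $D\le\binom{v-i+j}{\,j}$. For an $(i-j)$-subset $T$ of the vertex set, call a $j$-set $S$ disjoint from $T$ an \emph{extension} of $T$, say $S$ is \emph{isolated} if no edge of $H$ contains $T\cup S$, and declare
\[
f(H)=1\ \Longleftrightarrow\ \exists\,(i-j)\text{-set }T:\ \bigl(T\text{ has }\ge D\text{ isolated extensions}\bigr)\ \wedge\ \bigl(\text{every non-isolated extension of }T\text{ lies in }\ge2\text{ edges}\bigr).
\]
This is a $k$-uniform hypergraph property; for $t=0$ (so $j=0,D=1$) it is exactly ``$H$ has an isolated $i$-set'' and for $t=1$ it is essentially ``$H=\emptyset$''. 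The second clause — that the \emph{non}-isolated extensions must be \emph{doubly} covered — is the key new ingredient, and is exactly what keeps the $0\to1$ sensitivity small. I would then bound $s^0(f)$ and $s^1(f)$ separately.

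For the lower bound on $s^1(f)$ I would exhibit an explicit $1$-input $H^\star$ with a unique witness. Take an $(i-j)$-set $T_0$ and a family $\mathcal W_0$ of $D$ pairwise disjoint $j$-sets disjoint from $T_0$, and let $H^\star$ be the complete $k$-uniform hypergraph on $V\setminus T_0$ together with, for every extension $S$ of $T_0$ with $S\notin\mathcal W_0$, two $k$-edges that contain $T_0\cup S$ and avoid $\bigcup\mathcal W_0$. One checks that $T_0$ is the only $(i-j)$-set meeting the definition and that its isolated extensions are exactly $\mathcal W_0$ (here the complete hypergraph on $V\setminus T_0$ forces every other $(i-j)$-set to have only $O(1)$ isolated extensions, hence fewer than $D$). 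Adding any of the $\Theta(D\,v^{k-i})=\Theta(v^{k-i(1-t)})$ edges containing $T_0\cup S$ for some $S\in\mathcal W_0$ then destroys the witness — it both drops $T_0$'s isolated-extension count below $D$ and creates a singly-covered non-isolated extension — so $f$ flips to $0$. For the matching upper bound, at an arbitrary $1$-input a single bit-flip that kills a witness $T$ must either add an edge through $T$ hitting one of its $\ge D$ isolated $j$-slots ($O(D\,v^{k-i})$ edges) or else delete one of the edges that exposes an extension of $T$, a category one checks always has size $O(v^{j})\le O(v^{k-i(1-t)})$.

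For $s^0(f)$: from a $0$-input, adding an edge can neither create isolated extensions nor repair a doubly-covered one, so only an edge \emph{removal} can flip $f$ to $1$, and it must make some $(i-j)$-set $T$ satisfy the definition. By the ``doubly covered'' clause this can only happen when $T$ already has $\ge D$ isolated extensions and all of its under-covered non-isolated extensions sit inside the single removed edge $e$ — so $e$ is essentially determined by $T$ — and the number of candidate $T$ is controlled because each carries $\ge D$ isolated extensions while each isolated $i$-set is an extension of only $\binom{i}{j}$ sets: $\#\{\text{candidate }T\}\le\binom{i}{j}\binom{v}{i}/D=O(v^{i}/v^{it})=O(v^{i(1-t)})$. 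Combining, $s(f)=\max\{s^0(f),s^1(f)\}=\max\{O(v^{i(1-t)}),\Theta(v^{k-i(1-t)})\}$; taking $i=\lceil k/2\rceil$ and $t=1-\tfrac{k}{2i}\in[0,1)$ makes both exponents equal $k/2$ and gives a $k$-uniform hypergraph property with $s(f)=\Theta(v^{k/2})$ for every $k\ge3$.

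The main obstacle is the $s^0$ bound, i.e. the design of the property. The plain ``$H$ has an isolated $i$-set'' (or ``$\ge D$ isolated extensions'') already yields $s^1=\Theta(v^{k-i(1-t)})$, but its $0\to1$ sensitivity is $\Theta(v^{i})$ — one deletion from a near-optimal packing isolates an $i$-set — which is too large once $t>0$; the ``doubly covered'' clause cuts this to $O(v^{i(1-t)})$, and the delicate point is to verify that it does so \emph{without} spoiling the value $\Theta(v^{k-i(1-t)})$ of $s^1$, since that clause does introduce some deletion-sensitive edges which must be shown to be lower order. The remaining work is bookkeeping: choosing $j$ and the integer $D\approx v^{it}$ so the real exponent $i(1-t)$ genuinely emerges, checking uniqueness of the witness in $H^\star$, and confirming that every non-witness $(i-j)$-set really fails the definition.
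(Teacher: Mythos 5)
Your construction is genuinely different from the paper's. The paper defines $f(H)=1$ when $H$ contains an ``isolated'' clique $\mathcal{H}$ on $\Theta(v^t)$ vertices (isolated in the sense that no edge outside $\mathcal{H}$ meets $V(\mathcal{H})$ in $\ge i$ vertices), and bounds $s^0(f)$ by showing that two sensitive $\Theta(v^t)$-tuples can share at most $i-1$ vertices, hence $|\mathcal{F}| \le \binom{v}{i}/\binom{v^t}{i}$. You instead generalize the isolated-vertex property via $(i-j)$-sets with many isolated $j$-extensions plus a ``doubly covered'' side condition; the parameter $t$ enters through $j=\lceil it\rceil$ and the threshold $D\approx v^{it}$ rather than through the clique size. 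That is a legitimately different route, and the double-counting bound $\#\{T\}\le\binom{i}{j}\binom{v}{i}/D$ is the right kind of step, analogous in spirit to the paper's packing bound. Your $s^1$ construction and its $\Theta(D\,v^{k-i})=\Theta(v^{k-i(1-t)})$ count also parallel the paper's $\binom{v^t}{i}\binom{v-i}{k-i}$ term.

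However, the central claim in your $s^0$ argument — that ``from a $0$-input, adding an edge can neither create isolated extensions nor repair a doubly-covered one, so only an edge removal can flip $f$ to $1$'' — is false as stated, and this is not bookkeeping but the crux. Adding an edge $e\supseteq T$ can precisely repair undercoverage: if $T$ already has $\ge D$ isolated extensions and exactly one singly-covered non-isolated extension $S_1$, then adding a suitable $e\supseteq T\cup S_1$ that happens to cover only already-non-isolated extensions (there are many such $e$ when $k-i$ is not tiny, since $e$ has $k-i$ free vertices beyond $T\cup S_1$) makes $T$ a witness and flips $f$ to $1$. The number of such candidate edges is on the order of $v^{k-i}$ in the worst case, and $v^{k-i}$ is \emph{not} $O(v^{i(1-t)})$ across the full parameter range the theorem demands (for instance when $i$ is small relative to $k$ or $t$ is close to $1$). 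So the asserted bound $s^0(f)=O(v^{i(1-t)})$ is not established, and without it the conclusion does not follow. To salvage the approach you would need either a further clause in the definition that genuinely forbids addition-flips (the paper achieves exactly this by the geometry of a single near-clique: an added edge necessarily becomes an isolation violator for any overlapping tuple), or a counting argument that addition-flipping edges through any $T$ are at most a constant, together with the bound on the number of candidate $T$. Separately, your $H^\star$ construction as written cannot ``avoid $\bigcup\mathcal{W}_0$'' for extensions $S\notin\mathcal{W}_0$ that nonetheless intersect $\bigcup\mathcal{W}_0$; that is a minor fix, but it signals that the uniqueness-of-witness verification you defer is not purely routine either.
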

	\begin{proof}
		Let $\mathcal{H}$ be a $k$-uniform clique on $\Theta(v^t)$ vertices.  Define a $k$-uniform hypergraph property $f$ such that $f=1$ if there exists a copy of $\mathcal{H}$ inside the graph such that given any edge $e$ not entirely in $\mathcal{H}$, we have
		\begin{equation}
			|V(\mathcal{H}) \cap e| < i. \nonumber
		\end{equation}
		\indent Let us refer to this condition as the ``isolation condition". We claim that this is the desired $k$-uniform hypergraph property. First we consider $s^0(f)$. If there is no  $\mathcal{H}$ satisfying the condition inside the graph, we call $v^t$ vertices $\{w_1,...w_{v^t}\}$ a sensitive tuple if we can form $\mathcal{H}$ satisfying the isolation condition by either removing or adding one edge. 
		\begin{lem}
			Every sensitive tuple contains exactly one sensitive edge.
		\end{lem}
		\begin{proof}
			If the tuple can form $\mathcal{H}$ by removing one edge, then it is a desired clique but does not fit the isolation condition. There can only be one edge that needs to be removed. If there were two edges that violated the isolation condition, then removing only one of them wouldn't change the value of the function, and hence, the tuple wouldn't be a sensitive tuple.
If the tuple can form a desired $\mathcal{H}$ by adding one edge, then it satisfied the isolation condition and is one edge short of being a clique. But then there's only one edge that can be added to those points, as every other edge in the clique is already present.
		\end{proof}
		 Then every sensitive tuple contains exactly one sensitive edge. Let $\mathcal{F}$ denote the set of all sensitive tuples.  We have
		\begin{equation}
			s^0(f) \leq |\mathcal{F}|. \nonumber
		\end{equation}
		Then we can get an upper bound on $s^0(f)$ by finding an upper bound on $|\mathcal{F}|$. We claim that no two elements of $\mathcal{F}$ can share more than $i-1$ vertices. Suppose we have $A=\{w_1,...,w_{j}\}$ and $B=\{w_1^\prime,...,w_{j}^\prime\}$ two distinct sensitive tuples such that they have $i$ common vertices $\{v_1,...,v_i\}$ and $i<j$ (here $j$ is $\Theta(v^t)$).  Since $j$ grows with $v$ and $k$ is constant, we can consider larger graphs with $v^{t} > k + 2$.  Using this assumption, if both of our tuples can form a desired $\mathcal{H}$ by adding one edge, they are both almost cliques - they each have every edge required except one.  Then there is already at least one edge $e_{1}$ such that $\{v_1,..,v_i\} \subset e_{1} \subset A$ and a different edge $e_{2}$ satisfying $\{v_1,..,v_i\} \subset e_{2} \subset B$, 
		because in both $A$ and $B$ every subset of size $k$ is present as an edge except one, and there are more than one such subsets containing $\{v_1,...,v_i\}$ because 
		$j > k+2 > i+2$.  
		Then adding one edge to complete $A$ would still leave $e_{2}$, which is not contained in $A$ and violates our isolation condition, so $A$ would not be a 
		desired isolated subgraph.  Similarly adding an edge to $B$ cannot produce an isolated clique because of $e_{1}$. \\
		\indent
		If one of these two tuples can form a desired $\mathcal{H}$ by adding one edge and the other by removing an edge, then either $A$ or $B$ is already a clique with one extra edge violating the isolation condition, so similarly any attempt to make $A$ or $B$ into the desired $\mathcal{H}$ will not be possible because of overlapping edges containing points in
		$\{v_1,...,v_i\}$, which are required by the definition of clique.  \\
		 \indent
		If both tuples can form $\mathcal{H}$ by removing one edge, again there are at least $j-i \choose{k-1}$ edges containing vertices in $\{v_1,...,v_i\}$ that are overlapping 
		into our desired cliques, and with enough vertices this is far too many.  \\
		 \indent Thus we know that any set of $i$ vertices is contained in no more than one sensitive tuple in $\mathcal{F}$ and any sensitive tuple contains $v^t \choose{i}$ distinct sets of $i$ vertices. So we have
		\begin{equation}
			s^0(f) \leq |\mathcal{F}| \leq \frac{{v \choose{i}}}{{v^t \choose{i}}} = O(v^{i(1-t)}). \nonumber
		\end{equation}
		Finally we consider $s^1(f)$. When there exists such an $\mathcal{H}$ inside the graph, the only way to eliminate it is to either remove an edge inside $\mathcal{H}$ or add an edge with more than $i-1$ vertices in $V(\mathcal{H})$. Thus we have
		\begin{equation}
			s^1(f) = {v^t \choose{k}} + {v^t \choose{i}}{v-i \choose{k-i}} = \Theta (v^{k-i(1-t)}). \nonumber
		\end{equation}
		The second term represents the case of adding an edge with at least $i$ vertices in $V(\mathcal{H})$. We only need to consider the second term because it is the asymptotically dominating term. It then follows that we have
		\begin{equation}
			s(f)= \max\{s^0(f),s^1(f)\} = \max\{ O(v^{i(1-t)}),  \Theta(v^{k-i(1-t)})\}. \nonumber
		\end{equation}
	\end{proof}
	By looking at special cases of this result, we can find a $k$-uniform hypergraph property with sensitivity $O(v^{k/2})$.
	\begin{cor}
		There exists a $k$-uniform hypergraph property with sensitivity $O(v^{k/2})$, for all $k$.
	\end{cor}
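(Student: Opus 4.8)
The plan is to obtain the corollary directly from the theorem by choosing the free parameters $i$ and $t$ optimally. For each admissible pair $(i,t)$ the theorem yields a $k$-uniform hypergraph property with sensitivity $\max\{O(v^{i(1-t)}),\Theta(v^{k-i(1-t)})\}$, and the two exponents $i(1-t)$ and $k-i(1-t)$ always sum to $k$. Since the maximum of two nonnegative numbers with a fixed sum is smallest when they are equal, I would force $i(1-t)=k/2$, i.e.\ take $t=1-\tfrac{k}{2i}$, and then pick $i$ so that this $t$ lands in $[0,1]$ while $i$ remains an integer in $[1,k)$. Note $t\ge 0$ requires exactly $i\ge k/2$, so the admissible integer choices are $i\in\{\lceil k/2\rceil,\dots,k-1\}$, which is nonempty for every $k\ge 3$.

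Concretely, I would first dispose of $k=2$, which the theorem excludes: here the isolated-vertex property of Section~2 already has sensitivity $v-1=O(v)=O(v^{k/2})$. For $k\ge 3$ I would set $i=\lceil k/2\rceil$ and $t=1-\tfrac{k}{2\lceil k/2\rceil}$, so that for even $k$ one gets $i=k/2$, $t=0$, and for odd $k$ one gets $i=\tfrac{k+1}{2}$, $t=\tfrac{1}{k+1}$. In both cases a one-line check gives $1\le i<k$ and $t\in[0,1]$, so the theorem applies, and a short computation gives $i(1-t)=k/2$, hence $k-i(1-t)=k/2$ as well. Substituting into the theorem produces a $k$-uniform hypergraph property $f$ with $s(f)=\max\{O(v^{k/2}),\Theta(v^{k/2})\}=\Theta(v^{k/2})$, in particular $O(v^{k/2})$, as required.

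There is no real obstacle here, since all the work is in the theorem; the only points needing care are bookkeeping. The ceiling $\lceil k/2\rceil$ is forced (rather than $k/2$) precisely because $i$ must be an integer, which is why for odd $k$ the clique $\mathcal{H}$ has $\Theta(v^{1/(k+1)})$ vertices rather than constantly many. The other subtlety is the degenerate case $t=0$ for even $k$, where $\mathcal{H}$ is a clique on $\Theta(1)$ vertices; here I would simply take $\mathcal{H}$ to be a clique on $k$ vertices so that it contains at least one edge and the construction of the theorem still makes sense. One could also remark that since $i(1-t)=k/2$ is exactly achievable, the stronger conclusion $s(f)=\Theta(v^{k/2})$ holds, matching Tur\'an's lower bound when $k=2$ and the bound conjectured in Conjecture~3 for larger $k$.
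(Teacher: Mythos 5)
Your proof is correct and follows essentially the same route as the paper: dispose of $k=2$ via the isolated-vertex property, and for $k\geq 3$ choose $(i,t)$ to solve $i(1-t)=k/2$, exactly as the paper does with $t=1-\tfrac{k}{2i}$. Your added bookkeeping (fixing $i=\lceil k/2\rceil$, checking $t\in[0,1]$, and noting the degenerate $t=0$ case) is a more explicit rendering of the same argument.
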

	\begin{proof}
		An example with $k=2$ was given before. For $k\geq 3$, any solution to the equation $i(1-t)=\frac{k}{2}$ for $t\in[0,1]$ and $i\in\mathbb{N}$ works, and specifically for every $i\geq\frac{k}{2}$, setting $t=1-\frac{k}{2i}$ yields the desired result.
	\end{proof}

	We now introduce a lemma which will allow us, in certain cases, to obtain a tight lower bound on $s^0(f)$.

	\begin{lem}
		Let $q$ be a prime power and $d \leq q$. $\forall\ell\in\mathbb{N}$, there exists a collection of sets $S_1, ..., S_m \subseteq [q^{\ell+1}]$ such that $|S_i| = q$ for all $i$ and $|S_i \cap S_j| < d$ for $i \neq j$ and $m = q^{d\ell}$.
	\end{lem}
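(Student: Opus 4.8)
The plan is to use the classical polynomial (Reed--Solomon type) packing construction, with $\mathbb{F}_q$ playing the role of the evaluation domain and the extra factor $q^{\ell}$ in the ground set absorbed by passing to $\ell$-tuples of low-degree polynomials. The hypothesis $d \le q$ will enter exactly once, through the observation that a nonzero polynomial of degree $< d$ over $\mathbb{F}_q$ has at most $d-1 < q$ roots, and that is the only delicate point.

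First I would fix a bijection between $[q^{\ell+1}]$ and $\mathbb{F}_q \times (\mathbb{F}_q)^{\ell}$, which is legitimate since both sides have $q^{\ell+1}$ elements; from then on I work in $\mathbb{F}_q \times (\mathbb{F}_q)^{\ell}$. For each $\ell$-tuple $P = (p_1, \dots, p_{\ell})$ of polynomials in $\mathbb{F}_q[x]$ with $\deg p_t < d$ for all $t$, define
\[
  S_P \;=\; \bigl\{\, \bigl(a,\ (p_1(a), \dots, p_{\ell}(a))\bigr) \ :\ a \in \mathbb{F}_q \,\bigr\}.
\]
There are $q^{d}$ choices for each $p_t$, hence $m := (q^{d})^{\ell} = q^{d\ell}$ tuples $P$, and therefore $q^{d\ell}$ sets. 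Each $S_P$ has exactly $q$ elements, because the first coordinate already separates the $q$ listed points. Relabelling yields $S_1, \dots, S_m$, each of size $q$, inside a ground set of size $q \cdot q^{\ell} = q^{\ell+1}$.

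Next I would check the intersection bound. Fix $P \neq P'$ and choose an index $j$ with $p_j \neq p_j'$. If $(a, v) \in S_P \cap S_{P'}$, then in particular $p_j(a) = p_j'(a)$, i.e.\ $a$ is a root of the nonzero polynomial $p_j - p_j'$, whose degree is at most $d - 1$. Since $d - 1 \le q - 1$, this polynomial has at most $d-1$ roots in $\mathbb{F}_q$, so at most $d-1$ values of $a$ are admissible and $|S_P \cap S_{P'}| \le d - 1 < d$. This is precisely where $d \le q$ is essential: without it a polynomial of degree $< d$ such as $\prod_{a \in \mathbb{F}_q}(x - a) = x^{q} - x$ could vanish identically on $\mathbb{F}_q$, and two tuples differing in one coordinate by this polynomial would produce identical sets.

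I do not expect a genuine obstacle: the construction is standard and the verification is short. The only things to watch are (i) invoking the hypothesis $d \le q$ at the right place, namely the degree-versus-roots count over $\mathbb{F}_q$ above, and (ii) the exponent bookkeeping --- $q^{d}$ polynomials per coordinate, $\ell$ coordinates, ground set of size $q^{\ell+1}$ --- so that the final count comes out to exactly $m = q^{d\ell}$.
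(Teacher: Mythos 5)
Your proposal is correct and follows essentially the same route as the paper: index the sets by $\ell$-tuples of polynomials of degree $<d$ over $\mathbb{F}_q$, take each set to be the graph $\{(a,p_1(a),\dots,p_\ell(a))\}$, and bound intersections by counting roots of the difference in a coordinate where the tuples differ. If anything, your write-up is cleaner than the paper's (which muddles its own parameters $d,\ell$ with a stray $k$ and compares only the first coordinates), and you correctly isolate where $d\le q$ is needed.
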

	\begin{proof}
		Let $f : \mathbb{F}_q \rightarrow \mathbb{F}^\ell_q$ be such that $f(x) = (f_1(x), ... ,f_k(x))$ where each $f_i$ is a degree $k-1$ polynomial over $\mathbb{F}_q$. Then each $f$ corresponds to a set of $q$ $(k+1)$-tuples, $S_f = \{(x,f_1(x),...,f_k(x))\text{ }  |\text{ } x \in \mathbb{F}_q\}$. 

		If $g \neq f$, then the sets $S_f$ and $S_g$ intersect in at most $k-1$ points since the equation $f_1(x) = g_1(x)$ already has at most $k-1$ solutions in $\mathbb{F}_q$. We can relabel $S_f$ by associating $i \in [q^{k+1}]$ to each $f : \mathbb{F}_q \rightarrow \mathbb{F}^k_q$. There are $q^k$ distinct polynomials over $\mathbb{F}_q$ of degree $k-1$, so there are $(q^k)^k$ distinct sets of $(k+1)$-tuples that satisfy the above property. Hence, we can construct a collection of sets $S_1, ..., S_{q^{k^2}}$ such that $|S_i| = q$ for all $i$ and $|S_i \cap S_j| < k$ for $i \neq j$.
	\end{proof}
	\begin{prop}
		The upper bound in Corollary 3.3 is tight. That is, $\forall k \in \mathbb{N}, \exists i\in\mathbb{N},t\in[0,1]$ such that if $f$ is the function in Corollary 3.3 with parameters
	$t$, $i$, and $k$, then $s(f)=\Theta(v^\frac{k}{2})$
	\end{prop}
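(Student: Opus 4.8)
The plan is to pin down the parameters $i$ and $t$ so that the construction of Corollary 3.5 applies, and then to exhibit a single $k$-uniform hypergraph $G$ on $v$ vertices with $f(G)=0$ that is sensitive in $\Omega(v^{k/2})$ coordinates; combined with the $O(v^{k/2})$ upper bound from Corollary 3.5 this forces $s(f)=\Theta(v^{k/2})$. Concretely, I would pick $i$ and $\ell\in\mathbb{N}$ with $\ell(2i-k)=k$ and $1\le i<k$ — for odd $k\ge 3$ take $i=(k+1)/2$ and $\ell=k$, and for even $k\ge 4$ take $i=k/2+1$ and $\ell=k/2$ (the case $k=2$ being already covered, since Tur\'an's theorem forces $s(f)=\Omega(v)$ for every graph property) — and set $t=1/(\ell+1)$, so that $t=1-\tfrac{k}{2i}$ and $i(1-t)=k/2$; hence the function $f$ of Corollary 3.5 with these parameters has $s(f)=O(v^{k/2})$.

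For the matching lower bound I would restrict to the infinite family $v=q^{\ell+1}$ with $q$ a prime power, so that $v^t=q$ is an honest integer and the clique $\mathcal{H}$ on $v^t$ vertices defining $f$ is well defined, taking $q$ large enough that $q\ge i$ and $q\ge k+1$. Applying the preceding lemma with $d=i$ produces sets $S_1,\dots,S_m\subseteq[q^{\ell+1}]=[v]$ with $|S_j|=q=v^t$, with $|S_a\cap S_b|<i$ whenever $a\ne b$, and with $m=q^{i\ell}=\bigl(q^{\ell+1}\bigr)^{i\ell/(\ell+1)}=v^{i(1-t)}=v^{k/2}$. Let $G$ be the $k$-uniform hypergraph on $[v]$ whose edges are, for every $j$, all $k$-subsets of $S_j$ except one arbitrarily chosen $k$-subset $e_j$; all vertices lying in no $S_j$ are isolated.

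I would then verify that $G$ is a $0$-input with $s(f,G)\ge m$. First, $f(G)=0$: every edge of $G$ lies inside a single $S_j$, no $S_j$ spans a clique in $G$ (its edge $e_j$ is missing), and if some $v^t$-set $T$ induced a desired copy of $\mathcal{H}$ in $G$ then, comparing for $k+1$ vertices of $T$ the two edges obtained by exchanging a single vertex, the bound $|S_a\cap S_b|<i\le k-1$ forces both edges into the same $S_a$; propagating, all edges of the clique on $T$ lie in one $S_a$, whence $T=S_a$, contradicting $e_a\notin G$. Second, each $e_j$ is sensitive at $G$: in $G\cup\{e_j\}$ the set $S_j$ induces a full $k$-uniform clique, while every edge not contained in $S_j$ lies in some $S_b$ with $b\ne j$ and hence meets $S_j$ in fewer than $i$ vertices, so the isolation condition of the construction holds and $S_j$ is a desired copy of $\mathcal{H}$; thus $f(G\cup\{e_j\})=1$. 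Since $e_a\ne e_b$ for $a\ne b$ (equality would place a $k$-set inside $S_a\cap S_b$, impossible as $|S_a\cap S_b|<i<k$), we get $s^0(f)\ge s(f,G)\ge m=v^{k/2}$, and together with $s(f)=O(v^{k/2})$ this gives $s(f)=\Theta(v^{k/2})$.

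The only genuinely non-routine part should be the two structural checks in the last paragraph — that $G$ has no desired $\mathcal{H}$ and that reinserting $e_j$ creates one — both of which reduce to the single observation that, since the $S_j$ pairwise share fewer than $i\le k-1$ vertices, no edge supported inside one $S_j$ exhibits $i$ or more vertices of another, i.e.\ the isolation requirement is automatic for the cliques built on the $S_j$. The remaining ingredients (solving for $i$ and $\ell$, checking $v^t\in\mathbb{Z}$ along $v=q^{\ell+1}$, and the exponent arithmetic $q^{i\ell}=v^{k/2}$) are routine. I would also remark that the lower bound is in fact already available from the formula $s^1(f)=\Theta\bigl(v^{k-i(1-t)}\bigr)=\Theta(v^{k/2})$ established for the construction; the argument above has the additional merit of showing that the $s^0$ upper bound $O\bigl(v^{i(1-t)}\bigr)$ is itself tight.
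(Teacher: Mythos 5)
Your proof is correct and follows essentially the same route as the paper: both invoke Lemma 3.4 to pack $\Omega(v^{k/2})$ copies of $\mathcal{H}$ with one edge deleted, whose vertex sets pairwise intersect in fewer than $i$ points, so that the isolation condition is automatic and restoring the missing edge of any one of them flips $f$ from $0$ to $1$. The only substantive differences are in the even case, where the paper takes $i=k/2$ with constant-size cliques (its ``$t=1$'' is evidently a typo for $t=0$, since the construction uses cliques on $k+1$ vertices) while you take $i=k/2+1$ with cliques of size $v^{2/(k+2)}$, and in the fact that your lower-bound input is only exhibited along the subsequence $v=q^{\ell+1}$; the latter is harmless because, as you note, $s^1(f)=\Theta(v^{k-i(1-t)})=\Theta(v^{k/2})$ already yields the stated $\Theta$ bound for every $v$, so that the real content of both arguments is the tightness of the $s^0$ upper bound.
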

	\begin{proof}
		It is sufficient to prove that, in a special case of Theorem 3.1, the upper bound on $s^0(f)$ is tight, as we have already shown $s^{1}(f)$ is tight. We will prove this by evoking the lemma just introduced.

		First, consider the case of $k$ even. $i=\frac{k}{2},t=1$ is an assignment yielding $s(f)=O(v^\frac{k}{2})$, which we will prove to be tight.  Let $q$ be a prime power such that $k+1<q<2(k+1)$, the existence of which is guaranteed by Bertrand's Postulate. Let $\ell=\lfloor\log_q(v)-1\rfloor\geq\log_q(v)-2$, and $d=\frac{k}{2}$. Then by the lemma we have $m=q^{d\ell}=\Omega(v^\frac{k}{2}q^{-2\frac{k}{2}})=\Omega(v^\frac{k}{2}q^{-k})=\Omega(v^\frac{k}{2})$ many sets $S_1,\ldots S_m$ such that $|S_i|\geq k+1$ and $\forall i\neq j , |S_i\cap S_j|<\frac{k}{2}$. Let $S'_i$ be any subset of $S_i$ such that $|S'_i|=k+1$. Note that $\forall i\neq j,|S'_i\cap S'_j|<\frac{k}{2}$.

		Now we can construct an input for which $s^0(f)=\Omega(v^\frac{k}{2})$. Label the vertices $1,\ldots ,v$ and partition a subset of these into the sets $S'_i$. Make the sets of points $S'_i$ be cliques, and then remove any one edge from each of the cliques. Since no two share $\frac{k}{2}$ points, the isolation condition is automatically satisfied, so re-adding the removed edge to any of the almost-cliques changes the value of $f$ from $0$ to $1$. There are $\Omega(v^\frac{k}{2})$ many sets $S'_i$, so $s^0(f)=\Omega(v^\frac{k}{2})$.

		Next, for the case of $k$ odd.  $i=\frac{k+1}{2},t=\frac{1}{k+1}$ is an assignment yielding\\
$s(f)=O(v^\frac{k}{2})$, which we will prove to be tight. Let $q$ be a prime power such that $\frac{1}{2}v^t<q<v^t$, the existence of which is again guarenteed by Bertrand's Postulate.  Let $\ell=\frac{1}{t}-1=k$, and $d=\frac{k+1}{2}$. Then by the previous lemma, we have that $m=q^{d\ell}=\Omega(v^{td\ell})=\Omega(v^{\frac{1}{k+1}\frac{k+1}{2}k})=\Omega(v^\frac{k}{2})$ many sets $S_1,\ldots S_m$ such that $|S_i|\geq k+1$ such that $\forall i\neq j,|S_i\cap S_j|<\frac{k}{2}$. We proceed to construct our input as before, and again obtain $s^0(f)=\Omega(v^\frac{k}{2})$.
	\end{proof}

	\begin{rem}
		Although it might appear that Lemma 3.4 could be used to obtain a general tight lower bound on $s^0(f)$, there is an important caveat. Since we consider $\mathbb{F}^\ell_q$, $\ell$ must be a positive integer. However, using our argument, except in the case of $t=0$, $l$ is an integer if and only if $\frac{1}{t}$ is, and since $t\in[0,1]$, this is not guaranteed in general.
	\end{rem}

\section{Gaps in Sensitivity and Block Sensitivity for Graph Properties}
	In this section, we give a $k$-uniform hypergraph property that demonstrates a quadratic gap between sensitivity and block sensitivity for every even $k$. We first start with a graph property.

	\begin{theorem}
		There exists a graph property $f$ such that $s(f) = \Theta(v)$ and $bs(f) = \Theta(v^2)$.
	\end{theorem}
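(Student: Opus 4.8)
The plan is to use the graph analogue of the $i=1$ instance of Theorem 3.3: define $f(G)=1$ if and only if $G$ has a connected component that is a triangle $K_{3}$ (equivalently, $G$ contains an \emph{isolated} copy of $K_{3}$). This is visibly invariant under relabelling the vertices, so it is a graph property, and since $n={v\choose 2}=\Theta(v^{2})$ the two claims $s(f)=\Theta(v)$ and $bs(f)=\Theta(v^{2})$ together form exactly a quadratic gap, which is the largest a graph property can exhibit by the Tur\'an bound recalled in Section 2.

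I would prove $s(f)=O(v)$ by bounding $s^{0}(f)$ and $s^{1}(f)$ separately. For $s^{1}(f)$: connected components are disjoint, so each vertex lies in at most one triangle component, and therefore a single edge flip can destroy at most two triangle components; hence a $1$-input with three or more triangle components cannot be turned to $0$, while at a $1$-input with exactly two the only sensitive edges are the nine (currently absent) edges joining them. At a $1$-input with a unique triangle component $T=\{a,b,c\}$ the sensitive edges are exactly the three internal edges of $T$ (removing one turns $T$ into a path) together with the $3(v-3)$ edges from $T$ to the other vertices (adding one enlarges that component); no other flip affects whether $T$ is a triangle component, so $s^{1}(f)=3v-6$. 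For $s^{0}(f)$: starting from a $0$-input, a single edge flip can create a triangle component on a vertex set $T$ in only one of two rigid ways — either $T$ currently induces a path on three vertices that is already a whole connected component and we add the missing edge, or $T$ currently induces a triangle from which exactly one edge leaves and we delete that pendant edge. In each case the flipped edge is forced by $T$, and two distinct such $T$ are essentially vertex-disjoint (a $P_{3}$-component is a full component, and a triangle-with-one-pendant contains two degree-$2$ vertices that lie in no other triangle), so there are only $O(v)$ of them and $s^{0}(f)=O(v)$. Thus $s(f)=\Theta(v)$; the matching lower bound $s(f)=\Omega(v)$ comes from the $1$-input above or directly from Tur\'an.

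For the block sensitivity the upper bound $bs(f)\le n={v\choose 2}=O(v^{2})$ is immediate, since disjoint blocks are disjoint subsets of the edge set. For the lower bound I would imitate Rubinstein's all-zero input: let $G_{0}$ be the empty graph, so $f(G_{0})=0$, and fix a family $\mathcal{T}$ of pairwise edge-disjoint triangles of $K_{v}$; a maximum such packing has $\Theta(v^{2})$ triangles. For each $T\in\mathcal{T}$ let $B_{T}$ be the set of the three edges of the triangle on $T$. The blocks $B_{T}$ are pairwise disjoint since the triangles share no edge, and $G_{0}^{B_{T}}$ is the graph whose only edges form a triangle on $T$, hence $f(G_{0}^{B_{T}})=1$. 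Therefore $bs(f)\ge bs(f,G_{0})\ge|\mathcal{T}|=\Omega(v^{2})$, and $bs(f)=\Theta(v^{2})$.

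The main obstacle is the estimate $s^{0}(f)=O(v)$: one must enumerate exhaustively the ways a single edge flip can bring a triangle component into existence and see that it is the ``connected component'' clause — rather than merely ``contains a $K_{3}$'' or ``has a degree-$2$ vertex inside a triangle'' — that supplies the rigidity keeping this count linear. For instance the star $K_{1,v-1}$ is a $0$-input with $v-1$ vertices of degree $1$, yet its sensitivity is $0$, because adding a chord between two leaves or deleting a spoke never yields a triangle as a \emph{component}: a leaf cannot join a triangle component without first being severed from the centre. Once this rigidity is isolated, the remaining counting in both the $s^{0}$ and $s^{1}$ bounds is routine.
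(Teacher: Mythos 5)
Your proposal is correct and takes essentially the same route as the paper: the same ``isolated triangle'' property, the same empty-graph input with an edge-disjoint triangle packing for the $\Omega(v^{2})$ block-sensitivity bound, and the same case analysis of $s^{0}$ and $s^{1}$ hinging on the fact that the triples witnessing $0$-sensitivity are pairwise vertex-disjoint. Your treatment is in fact slightly more careful than the paper's in the $s^{1}$ case (you handle inputs with two or more isolated triangles explicitly) and in the exhaustive enumeration of how a single flip can create a triangle component.
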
 

	\begin{proof}
		Let $f$ be the graph property, ``there exists an isolated
		 triangle.''  By isolated triangle, we mean three vertices $a,b,c$ with edges $\{a,b\}, \{a,c\}, \{b,c\}$ and no 
		edges between any one of $a$,$b$, or $c$ and any $v \in V\setminus\{a,b,c\}$.

		Consider the block sensitivity on an empty graph input. There are at least $ \frac{{v \choose 3}}{3(v-3)} = \Omega(v^2)$ edge-disjoint triangles on $v$ vertices. Hence, $bs(f) = \Omega(v^2)$ because we can just group the edges as these disjoint triangles. Since $bs(f) \leq {v \choose 2}$ because there are only this many vertices, it follows that $bs(f) = \Theta(v^2)$.

		For sensitivity, we consider $s^0(f)$ first. To change from 0 to 1, we need to either add an edge to complete an isolated triangle or remove an edge so that a triangle becomes isolated. We call a 3-tuple $(v_1,v_2,v_3)$ sensitive if adding or removing an edge from the graph will make the $(v_1,v_2,v_3)$ vertices an isolated triangle.

	\begin{claim}
		Sensitive 3-tuples are pairwise vertex disjoint. 
	\end{claim}
	\begin{proof}
		To see this, let two distinct sensitive 3-tuples be $(a,v_1,v_2)$ and $(a,w_1,w_2)$. Since $(a,v_1,v_2)$ is 1 edge away from being an isolated triangle, $a$ cannot share an edge with both $w_1$ and $w_2$. In fact, $a$ shares an edge with exactly one of them because otherwise $(a,w_1,w_2)$ is not 1 edge from being an isolated triangle. By the same logic, $a$ shares an edge with exactly one of $v_1, v_2$. This implies that induced subgraphs on $(a,v_1,v_2)$ and $(a,w_1,w_2)$ are trees. However, this means that $(a,v_1,v_2)$ cannot be sensitive since there is an edge connecting $a$ to a vertex other than $v_1,v_2$. Hence, sensitive 3-tuples cannot share a vertex. 
	\end{proof}

	Now we can prove that $s^0(f) = \Theta(v)$. Let $C_1, C_2, ... C_m$ be sensitive 3-tuples on a graph $G$. Since two sensitive 3-tuples cannot share a vertex, a vertex uniquely determines a sensitive 3-tuple. Hence, $m \leq v$. Since we can have $v/3$ vertex disjoint trees consisting of 3 vertices on $G$, $s^0(f) = \Theta(v)$.

	Next we consider $s^1(f)$. To change from 1 to 0, we need to either remove an edge from the isolated triangle or add an edge to it so that it is no longer isolated. For the first case, it is easy to see that 3 bits are sensitive. For the second case, at most $3(v-3)$ bits can make the triangle not isolated, so $s^1(f) = \Theta(v),$ and therefore $s(f) = \max\{\Theta(v),\Theta(v)\}=\Theta(v)$.
	\end{proof}

	Then we can generalize this graph property to a $k$-uniform hypergraph property for any $k \in \mathbb{N}$, which gives a quadratic gap for $k$ even and a nearly quadratic gap for $k$ odd.
	\begin{theorem}
		Let $k,i \in \mathbb{N}$ and $1 \leq i \leq k/2$.  There exists a $k$-uniform hypergraph property $f$ on $v$ vertices such that $s(f) = \Theta(v^{k-i})$ and $bs(f) = \Theta(v^{k})$.
	\end{theorem}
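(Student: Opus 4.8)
The plan is to use the very construction of Theorem 3.3 in its degenerate case $t=0$ — so that $\mathcal H$ is a $k$-uniform clique on a \emph{constant} number $c$ of vertices — together with a new block-sensitivity estimate. Concretely, fix a constant $c=c(k)$ with $c\geq k+3$ and let $f(G)=1$ precisely when $G$ contains a set $W$ of $c$ vertices spanning a complete $k$-uniform clique such that $|E\cap W|<i$ for every edge $E\not\subseteq W$. When $k=2$, $i=1$, $c=3$ this is exactly the isolated-triangle property of Theorem 4.2. One caveat to record up front: we cannot literally substitute $t=0$ into Theorem 3.3, since its proof repeatedly invokes $v^{t}>k+2$; but those arguments use only that the clique size exceeds $k+2$, so they apply verbatim with $v^{t}$ replaced by our constant $c\geq k+3$.

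For the sensitivity bound I would invoke (the proof of) Theorem 3.3 in this form (or Theorem 4.2 when $k=2$): Lemma 3.2 and the argument that two sensitive tuples share fewer than $i$ vertices give $s^{0}(f)=O(v^{i(1-t)})=O(v^{i})$, and the count of the two ways to destroy an isolated $\mathcal H$ gives $s^{1}(f)=\Theta(v^{k-i(1-t)})=\Theta(v^{k-i})$. Since the hypothesis $i\leq k/2$ forces $i\leq k-i$, the first bound reads $s^{0}(f)=O(v^{k-i})$, and hence $s(f)=\max\{s^{0}(f),s^{1}(f)\}=\Theta(v^{k-i})$.

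It remains to establish $bs(f)=\Theta(v^{k})$. The upper bound is immediate from $bs(f)\leq\binom{v}{k}=O(v^{k})$. For the lower bound I would evaluate block sensitivity at the empty graph: the edge set of any single potential $c$-clique is a sensitive block there, because inserting exactly those $\binom{c}{k}$ edges produces a copy of $\mathcal H$ whose isolation condition holds vacuously (no other edge is present). So it suffices to pack $\Omega(v^{k})$ pairwise edge-disjoint copies of $K_{c}^{(k)}$ inside $K_{v}^{(k)}$, which a greedy argument yields exactly as in Theorem 4.2: there are $\binom{v}{c}=\Theta(v^{c})$ candidate $c$-sets, each chosen clique removes $\binom{c}{k}$ edges from play, and each edge lies in $\binom{v-k}{c-k}=\Theta(v^{c-k})$ candidate $c$-sets, so greedily selecting cliques all of whose edges are still unused produces at least $\binom{v}{c}/(\binom{c}{k}\binom{v-k}{c-k})=\Omega(v^{k})$ of them. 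This gives $bs(f)=\Theta(v^{k})$.

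The steps requiring the most care are structural checks rather than new ideas, namely confirming that the bits counted by $s^{1}(f)$ are genuinely sensitive on the witness input $G_{0}$ consisting of one isolated $c$-clique on $W$ plus isolated vertices. Removing an interior edge of $W$ destroys $W$ as a clique, and since every edge of $G_{0}$ lies inside $W$ no other clique can survive, so $f$ drops to $0$; these give $\binom{c}{k}$ sensitive bits. Adding an external edge $E$ with $i\leq|E\cap W|\leq k-1$ breaks the isolation of $W$, and any would-be new clique on a $c$-set $W'\neq W$ must contain one of the $\geq 1$ endpoints of $E$ lying outside $W$, which forces all $\binom{c-1}{k-1}>1$ distinct edges of $W'$ through that endpoint to equal $E$ — impossible once $c\geq k+1$; so $f$ again drops to $0$, and there are $\sum_{j\geq i}\binom{c}{j}\binom{v-c}{k-j}=\Theta(v^{k-i})$ such bits. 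Assembling these pieces yields $s(f)=\Theta(v^{k-i})$ and $bs(f)=\Theta(v^{k})$.
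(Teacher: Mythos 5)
Your proof is correct and takes essentially the same approach as the paper: both substitute $t=0$ into Theorem 3.1, evaluate block sensitivity on the empty graph, and greedily pack $\Omega(v^{k})$ pairwise edge-disjoint constant-size cliques. The one substantive difference is your insistence on $c\geq k+3$, where the paper uses $c=k+1$ (i.e.\ $\mathcal H=K_{k+1}^{(k)}$, as made explicit in the proof of Lemma 4.5). You raised this because the proof of Theorem 3.1 invokes $v^{t}>k+2$, but tracing what that hypothesis is actually used for shows the disjointness argument only needs that each $j$-vertex tuple has at least two distinct $k$-subsets through any prescribed $i$-set of shared vertices, i.e.\ $\binom{j-i}{k-i}\geq 2$, which for $i\leq k/2$ holds already when $j>k$. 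So $c=k+1$ suffices and the paper's choice is sound; your larger constant is a safe over-correction that does not affect the asymptotics, since the greedy count $\binom{v}{c}/\bigl(\binom{c}{k}\binom{v-k}{c-k}\bigr)=\Theta(v^{k})$ for any constant $c>k$. Finally, your explicit check that inserting an external edge with $|E\cap W|\geq i$ cannot create a fresh isolated clique, so that the $\Theta(v^{k-i})$ bits counted by $s^{1}$ are genuinely sensitive on the one-isolated-clique witness, fills in a structural detail the paper leaves implicit and is a welcome addition.
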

	\begin{proof}
		Consider the function from Theorem 3.1, with $t$ set to 0. Since $i\leq\frac{k}{2}$, we have that $i\leq k-i$, so $s(f)=\Theta(v^{k-i})$. We now wish to show that $bs(f)=\Theta(v^k)$.
		\begin{lem}
			In a $k$-uniform hypergraph on $v$ vertices, there are $\Omega (v^k)$ edge disjoint $K_{k+1}^{(k)}$.
		\end{lem}
		\begin{proof}
			Note that $t=0$ because we are considering cliques of a constant size, $k+1$.  There are $v \choose{k+1}$  many ($k+1$)-cliques, and for every clique chosen, we eliminate any other cliques with more than $k-1$ points in common. In this way we get at least
		\begin{equation}
			\frac{{v \choose{k+1}}}{{k+1 \choose{k}}(v-k)} = c \cdot v^{k} \nonumber
		\end{equation}		 
		many cliques, and any two of them have no more than $k-1$ points in common, which guarantees that they are edge-disjoint cliques (no edge is in two distinct cliques).
		\end{proof}
		Consider the empty graph. By the lemma, there are $\Omega (v^k)$ disjoint cliques and each of them is a sensitive block. Hence we have $bs(f) = \Theta (v^k)$ by applying the trivial upper bound on block sensitivity.
	\end{proof}
	
	\begin{cor}
		For all $k \in \mathbb{N}$, there exists a $k$-uniform hypergraph property on $v$ vertices such that $s(f) = \Theta(v^{\lceil k/2 \rceil})$ and $bs(f)= \Theta (v^k)$.
	\end{cor}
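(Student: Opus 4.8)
The plan is to deduce this as an immediate consequence of Theorem 4.3 by choosing the parameter $i$ optimally. Theorem 4.3 provides, for any $i$ with $1 \le i \le k/2$, a $k$-uniform hypergraph property with sensitivity $\Theta(v^{k-i})$ and block sensitivity $\Theta(v^{k})$; the sensitivity exponent $k-i$ is minimized by taking $i$ as large as the hypothesis permits. So first I would set $i = \lfloor k/2 \rfloor$, observe that this satisfies $1 \le i \le k/2$ for every $k \ge 2$, and use the identity $\lfloor k/2 \rfloor + \lceil k/2 \rceil = k$ to rewrite $k - i = \lceil k/2 \rceil$. Feeding this value of $i$ into Theorem 4.3 then yields a $k$-uniform hypergraph property $f$ with $s(f) = \Theta(v^{k-i}) = \Theta(v^{\lceil k/2 \rceil})$ and $bs(f) = \Theta(v^{k})$, which is exactly the asserted statement.

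It remains only to dispose of the small value $k = 1$ (and $k = 0$, if one's convention includes it in $\mathbb{N}$), which lies outside the range $i \ge 1$ allowed by Theorem 4.3. For $k = 1$ a $1$-uniform hypergraph property is just a symmetric Boolean function on the $v$ vertices, and the ``at least one vertex is present'' property (the $1$-uniform analogue of \textsc{or}) has $s(f) = bs(f) = v = \Theta(v^{\lceil 1/2 \rceil})$ on the all-zeros input, so the corollary holds in that case as well; the case $k = 0$ is degenerate, being a function of a single variable, and is handled by any nonconstant such function.

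I do not anticipate a real obstacle: the corollary is essentially a repackaging of Theorem 4.3 together with the remark that $i = \lfloor k/2 \rfloor$ is the admissible choice minimizing the sensitivity exponent. The only points demanding any attention are the floor/ceiling bookkeeping $k - \lfloor k/2 \rfloor = \lceil k/2 \rceil$ and the verification that $\lfloor k/2 \rfloor \ge 1$ precisely when $k \ge 2$, so that the degenerate small cases are separated out and treated directly as above.
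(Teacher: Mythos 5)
Your proof is correct and matches the paper's own argument: the paper sets $i = k/2$ for even $k$ and $i = (k-1)/2$ for odd $k$ in Theorem 4.3, which is precisely your choice $i = \lfloor k/2 \rfloor$ together with the identity $k - \lfloor k/2 \rfloor = \lceil k/2 \rceil$. The only difference is that you also treat the degenerate case $k=1$, which the paper implicitly excludes (its standing convention elsewhere is $k \ge 2$); that extra care is harmless but not needed.
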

	\begin{proof}
		When $k$ is even, choose $i=k/2$ and when $k$ is odd, choose $i=(k-1)/2$ in above proposition.
	\end{proof}
	
	\begin{rem}
		This proof can be generalized past the specific subgraph of  $K_{k+1}^{(k)}$, but to what extent, we are not certain.  One possible type of subgraphs for which the proof could apply are ``near-cliques,'' or subgraphs that are a constant number of edges away from being a clique. The idea is to preserve the fact that there is only one way to complete a sensitive tuple by adding a single edge.  Certain types of subgraphs do not fit this criterion. For example, a simple cycle of ordered vertices $\{v_1, \ldots , v_w\}$ with one ``chord'' edge added, say $\{v_1,v_{w/2}\}$, can be completed (up to isomorphism) from a simple cycle in a number of ways that is $\Theta(w)$, where the cycle has $w$ vertices.  
	\end{rem}

	\section{Further Work and Open Problems}
		We have shown that there exists a $k$-uniform hypergraph property $f$ such that $s(f) \leq O(v^{k/2})$, and that for $k$ even, there is a $k$-uniform hypergraph property that demonstrates quadratic separation between $bs(f)$ and $s(f)$. Hence, our results answer Conjecture 1 in the affirmative and provide partial answer to Conjecture 2. It remains unknown whether Conjecture 2 holds for odd $k$. Conjecture 3, on the other hand, has been answered negatively by Li and Sun \cite{LS} after our initial preprint.

		A possible continuation of this work appears to be the following modification to the function presented in this paper: the function in Theorem 3.1 can be viewed as an indicator function of a subgraph with certain degree of isolation. Although in this paper we restrict our attention to the case when the subgraph is a clique, there are other subgraphs that we can look for to obtain the same result. It is possible that choosing a different subgraph might yield better results, especially in the case of $k$ odd.
	
	In summary, we still have the following statement open:
	\begin{conj*}
		There exists, for every $k$ odd, a $k$-uniform hypergraph property $f$ such that $s(f)$ and $bs(f)$ display a quadratic gap. That is,  $bs(f)=\Omega(s(f)^2)$.
	\end{conj*}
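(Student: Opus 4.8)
The plan is to stay within the architecture of Theorems 3.1 and 4.2: build $f$ as the indicator of the existence, inside the input hypergraph, of an isolated copy of a fixed $k$-uniform gadget $\mathcal H$ on a constant number of vertices, and then choose $\mathcal H$ and its isolation criterion so that $s(f)=\Theta(v^{k/2})$ while $bs(f)$ remains $\Theta(v^{k})$. For even $k$ the clique $K_{k+1}^{(k)}$ with isolation parameter $i=k/2$ does precisely this; the argument degrades for odd $k$ only because $i$ must be an integer, and with the clique one has $s^{0}(f)=\Theta(v^{i})$ (sensitive $0$-tuples pairwise share fewer than $i$ vertices) and $s^{1}(f)=\Theta(v^{k-i})$ (one may add an edge meeting $V(\mathcal H)$ in exactly $i$ vertices), so $\max\{v^{i},v^{k-i}\}\ge v^{(k+1)/2}$.

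The first idea I would pursue is to \emph{decouple} the two roles played by $i$: the integer controlling how two sensitive $0$-tuples can overlap comes from a rigidity argument (overlapping internal edges leak out of both copies), whereas the integer controlling $s^{1}$ comes from a leak count, and these are logically independent. Concretely, I would replace the symmetric criterion ``$|E\cap V(\mathcal H)|<i$ for every $E\not\subseteq V(\mathcal H)$'' by an asymmetric one --- distinguishing a $\lceil k/2\rceil$-subset $U\subseteq V(\mathcal H)$ and forbidding a leaking edge only when it meets $U$ in all $\lceil k/2\rceil$ of its vertices --- together with an asymmetric gadget $\mathcal H$ whose vertices all have distinct degrees, so that $\mathrm{Aut}(\mathcal H)$ is trivial and the rigidity argument is as sharp as possible. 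The hope is that destroying an isolated copy then costs only $\Theta(v^{k-\lceil k/2\rceil})=\Theta(v^{(k-1)/2})$ single-bit additions, while a refined count of sensitive $0$-tuples --- weighting $i$-subsets of $[v]$ by the number of tuples containing them, in the spirit of the polynomial packing of Lemma 3.4 --- still gives $|\mathcal F|=O(v^{k/2})$ rather than $O(v^{(k+1)/2})$.

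The remaining steps would be routine: (i) reprove the analogue of Lemma 3.2, that every sensitive $0$-tuple contains exactly one sensitive edge, and check that the rigidity argument survives for the asymmetric $\mathcal H$; (ii) combine the bounds of step one to obtain $s(f)=\Theta(v^{k/2})$; and (iii) observe that $bs(f)=\Theta(v^{k})$ comes for free --- $\mathcal H$ still has $O(1)$ vertices and edges, so the edge-disjoint-clique lemma from the proof of Theorem 4.2 applies with $\mathcal H$ in place of $K_{k+1}^{(k)}$ to produce $\Omega(v^{k})$ pairwise edge-disjoint copies, each trivially isolated on the empty input, and $bs(f)\le\binom{v}{k}$ closes the estimate. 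If no single constant-size gadget works, a fallback is a \emph{two-scale} construction --- an isolated clique on $k+1$ vertices \emph{together with} an isolated copy of the $\Theta(\sqrt v)$-vertex polynomial packing of Lemma 3.4 --- arranged so that the two scales contribute $v^{(k-1)/2}$ and $v^{1/2}$ to both $s^{0}$ and $s^{1}$, while the block-sensitivity witness on the empty graph still splits into $\Theta(v^{k})$ disjoint blocks.

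The main obstacle --- and the reason the conjecture is still open --- is step (ii) for odd $k$. In every isolation criterion we have examined so far, the rigidity bound on overlaps of sensitive $0$-tuples and the leak count bounding $s^{1}$ turn out to be governed by the \emph{same} integer, so shrinking one below $v^{(k+1)/2}$ inflates the other above $v^{k/2}$; and the one obvious route to a genuinely half-integral exponent --- a gadget on $\omega(1)$ vertices, as in the $t>0$ regime of Theorem 3.1 --- simultaneously destroys the $\Theta(v^{k})$ block-sensitivity lower bound, because a gadget with $\omega(1)$ edges cannot be packed edge-disjointly $\Theta(v^{k})$ times. Overcoming this seems to demand either a constant-size gadget whose ``effective dimension'' is nonetheless $k/2$, or a genuinely different mechanism, such as importing a Rubinstein--Chakraborty-style block structure into the symmetry-respecting hypergraph setting.
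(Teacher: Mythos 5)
This statement is not proved in the paper: it appears verbatim in Section~5 under ``Further Work and Open Problems'' as an explicitly unresolved conjecture, and your write-up correctly ends by conceding that the conjecture is still open rather than by producing a function. So there is no proof to compare against; what you have submitted is a research programme, and the right thing to check is whether it accords with the paper's own assessment.

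It does, and in places sharpens it. The paper's closing discussion suggests exactly the move you lead with --- replace the clique gadget of Theorem~3.1 by a subgraph with a small (ideally trivial) automorphism group, such as one in which every vertex has a distinct degree, in the hope that the rigidity argument for sensitive $0$-tuples improves. Your framing of the obstacle is a useful refinement of the paper's informal remark: you isolate the fact that in the clique construction a single integer $i$ simultaneously governs both the overlap bound that caps $s^{0}(f)$ at $O(v^{i})$ and the leak count that forces $s^{1}(f)=\Theta(v^{k-i})$, so that for odd $k$ the max is stuck at $v^{(k+1)/2}$ and one only gets the exponent $\tfrac{2k}{k+1}$ reported in the introduction. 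You also correctly note the tension that blocks the most naive escape: allowing the gadget to grow to $\omega(1)$ vertices (the $t>0$ regime of Theorem~3.1) would ruin the $\Omega(v^{k})$ edge-disjoint packing in the lemma inside the proof of Theorem~4.2, and hence the block-sensitivity lower bound. One caution on your ``decoupling'' idea: if you weaken the isolation criterion to forbid only edges meeting a distinguished $\lceil k/2\rceil$-subset $U$, you must keep $f$ invariant under $S_{v}$, which means quantifying over \emph{all} choices of $U\subseteq V(\mathcal H)$; once you do that, it is not obvious the leak count drops below $\Theta(v^{k-\lceil k/2\rceil})$, and you should verify the analogue of Lemma~3.2 (one sensitive edge per sensitive tuple) still holds, since with a non-clique gadget and an asymmetric isolation rule there may be several single-edge completions. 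None of this is a gap in your submission, because you do not claim a proof; it is simply where the next attempt has to do real work.
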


	\subsection*{Acknowledgements}
		We would like to thank our mentor, Professor Laci Babai, for introducing us to the Sensitivity Conjecture and its related problems, and providing us with many helpful comments.  Without his guidance and suggestions, this paper would not have been possible.  We also thank him and Professor Stuart Kurtz for organizing the wonderful Computer Science REU, during which this paper was completed.

\end{document}